\providecommand{\U}[1]{\protect\rule{.1in}{.1in}}
\newtheorem{theorem}{Theorem}
\newtheoremstyle{example}{\topsep}{\topsep}
{}
{}
{\bfseries}
{}
{  }
{\thmname{#1}\thmnumber{ #2.}\thmnote{ (#3)}}
\theoremstyle{example}
\begin{document}

\title{Minimal-memory realization of pearl-necklace encoders of general quantum convolutional codes}
\author{Monireh Houshmand and Saied Hosseini-Khayat\\\textit{Department of Electrical Engineering,}\\\textit{Ferdowsi University of Mashhad, Iran}\\\{monirehhoushmand, saied.hosseini\}@gmail.com}

\date{\today}

\maketitle
\begin{abstract}
Quantum convolutional codes, like their classical counterparts,
promise to offer higher error correction performance than block codes of equivalent 
encoding complexity,
and are expected to find important applications in reliable quantum communication 
where a continuous stream of
qubits is transmitted.
Grassl and Roetteler devised an algorithm to encode a quantum convolutional code 
with a ``pearl-necklace encoder."
Despite their theoretical significance as a neat way of representing quantum
convolutional codes, they are not well-suited to practical realization.
In fact, there is no straightforward way to implement any given pearl-necklace structure.
This paper closes the gap between theoretical representation and practical
implementation.
In our previous work, we presented an efficient algorithm for finding a minimal-memory
realization of a pearl-necklace encoder for Calderbank-Shor-Steane (CSS) convolutional codes.
This work extends our previous work and presents an
algorithm for turning a pearl-necklace encoder for a general (non-CSS) quantum
convolutional code into a realizable quantum convolutional encoder.
We show that a minimal-memory realization depends on the commutativity
relations between the gate strings in the pearl-necklace encoder.
We find a realization by means of a weighted graph which details the
non-commutative paths through the pearl-necklace.
The weight of the longest path in this graph is equal to the minimal amount of
memory needed to implement the encoder.
The algorithm has a polynomial-time complexity in the number of gate strings in
the pearl-necklace encoder.
\end{abstract}
\begin{figure}
[ptb]
\begin{center}
\includegraphics[
natheight=3.499900in,
natwidth=11.000400in,
width=6.0502in
]%
{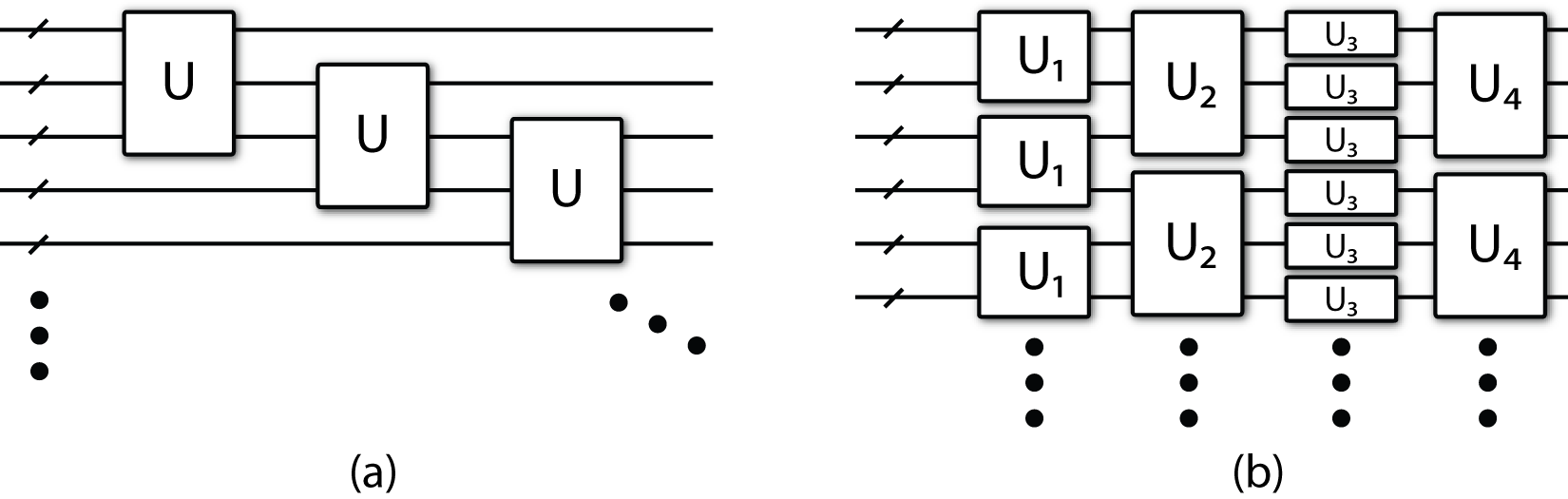}%
\caption{Two different representations of the encoder for a quantum
convolutional code. (a) Representation of the encoder as a convolutional
encoder. (b) Representation of the encoder as a pearl-necklace encoder~\cite{ourpaper}.}%
\label{qcc-and-pearl-necklace}%
\end{center}
\end{figure}
Quantum error correction codes are used to protect quantum information from decoherence and operational errors~\cite{qecbook,PhysRevLett.81.2594,thesis97gottesman,book2000mikeandike,PhysRevLett.77.793,PhysRevA.54.1098,PhysRevLett.78.405,ieee1998calderbank,PhysRevLett.79.3306,mpl1997zanardi}. Depending on their approach to error control,
error correcting codes can be divided into two general classes: block codes and convolutional codes. In the case of a block code, the original state is first divided into a finite number of blocks of fixed length.
Each block is then encoded separately and the encoding is independent of the other blocks.
On the other hand, a quantum convolutional code~\cite{PhysRevLett.91.177902,arxiv2004olliv,isit2006grassl,ieee2006grassl,ieee2007grassl,isit2005forney,ieee2007forney,cwit2007aly,arx2007aly,arx2007wildeCED,arx2007wildeEAQCC,arx2008wildeUQCC,arx2008wildeGEAQCC,pra2009wilde}
encodes an incoming stream of quantum information into an
outgoing stream. Fast decoding algorithms exist for quantum convolutional codes~\cite{arx2007poulin} and in general, they are preferable in terms of their performance-complexity
tradeoff~\cite{ieee2007forney}.

The
encoder for a quantum convolutional code has a representation as a \emph{convolutional encoder} or as a \emph{pearl-necklace} encoder. The convolutional encoder ~\cite{PhysRevLett.91.177902,arxiv2004olliv},~\cite{arx2007poulin} consists of a single unitary
repeatedly applied to a stream of quantum data (see Figure~\ref{qcc-and-pearl-necklace}(a)).
On the other hand, the pearl-necklace encoder (see Figure~\ref{qcc-and-pearl-necklace}(b)) consists of several strings
of the same unitary applied to the quantum data stream. Grassl
and R\"{o}tteler~\cite{isit2006grassl} proposed an algorithm for encoding any
quantum convolutional code with a pearl-necklace encoders. The algorithm consists of a sequence
of elementary encoding operations. Each of these
elementary encoding operations corresponds to a gate string in the pearl-necklace encoder.

The amount of required memory plays a key role for implementation of any
encoder, since this amount will result in overhead in the implementation of communication protocols. Hence any reduction in the required amount of memory will help in practical implementation of quantum computer.

It is trivial to determine the amount of memory required for implementation of a convolutional encoder: it is equal to the number of qubits that are
fed back into the next iteration of the unitary that acts on the stream. For example, the convolutional encoders in the Figures~\ref{qcc-and-pearl-necklace}(a),~\ref{pearl2convosimple}(c) and~\ref{pearl-convo-ex}(b) require two, one and four frames
of memory qubits, respectively.

In contrast, the practical realization of a pearl-necklace encoder is not explicitly clear. To make it realizable, one should rearrange the gate strings in the pearl-necklace encoder so that it becomes a convolutional encoder.
 In~\cite{ourpaper} we proposed an algorithm for finding the minimal-memory realization of a pearl necklace encoder for the CSS class of convolutional codes. This kind of encoder consists of CNOT gate strings only \cite{PhysRevA.79.062325}.

In this paper we extend our work to find the minimal-memory realization of a pearl-necklace encoder for a general (non-CSS) convolutional code. A general case includes all gate strings that are in the
shift-invariant Clifford group~\cite{isit2006grassl}:
 Hadamard gates, phase gates,  controlled-phase
gate string, finite-depth and infinite-depth~\cite{arx2008wildeUQCC,pra2009wilde} CNOT operations. We show that there are many realization for a given pearl-necklace encoder which are obtained considering non-commutativity relations of gate strings in the pearl-necklace encoder.
 Then for finding the minimal-memory realization a specific graph, called non-commutativity graph is introduced. Each vertex in the non-commutativity graph, corresponds to a gate string in the pearl-necklace encoder. The graph features a directed edge from one vertex
to another if the two corresponding gate strings do not commute. The weight of a directed edge
depends on the degrees of the two corresponding gate strings and their type of non-commutativity. The weight of the longest path in the graph is equal to the minimal memory requirement for the pearl-necklace encoder. The complexity for constructing this graph is quadratic in
the number of gate strings in the encoder.

The paper is organized as follows. 
In Section~\ref{sec:def-not}, we introduce some definitions and notation 
that are used in the rest of paper. 
In Section~\ref{memory-general}, we define three different types 
of non-commutativity and then propose an algorithm to find the  minimal memory 
requirements in a general case. In Section \ref{conclu}, we will summarize 
the contribution of this paper.

\section{Definitions and notation}

\label{sec:def-not}We first provide some definitions and notation which are useful for our analysis later on.
The gate strings in the pearl-necklace encoder and the gates in the convolutional encoder are numbered from left to right. We denote the $i^{\text{th}}$ gate string in the pearl-necklace encoder,
$\overline{U}_i,$ and the $i^{\text{th}}$ gate in the convolutional encoder, $U_i.$

Let $\overline{U}$, without any index
specified, denote a particular infinitely repeated sequence of $U$ gates, where the sequence contains the
same $U$ gate for every frame of qubits.

Let $U$ be either CNOT or CPHASE gate. The notation $\overline{U}\left(  a,bD^{l}\right)$ refers to a string of gates in a pearl-necklace encoder and
denotes an infinitely repeated sequence of $U$ gates from qubit $a$ to qubit
$b$ in every frame where qubit $b$ is in a frame delayed by $l$.
\footnote{Instead of the previously used notation $\overline{U}(  a,b)(D^{l}),$ 
we prefer to use  $\overline{U}\left(  a,bD^{l}\right)$  
as it seems to better represent the concept.}

Let $U$ be either phase or Hadamard gate. The notation $\overline{U}\left(b\right)  $ refers to a string of gates in a pearl-necklace encoder and
denotes an infinitely repeated sequence of $U$ gates which act on  qubit $b$ in every frame.
By convention we call this qubit, the target of $\overline{U}\left(b\right)  $ during this paper.

If $\overline{U_i}$ is $\overline{\text{CNOT}}$ or $\overline{\text{CPHASE}}$ the notation $a_i,b_i,$ and $l_i$ are used to denote its source index, target index and degree, respectively. If $\overline{U_i}$ is $\overline{H}$ or $\overline{P}$ the notation $b_i$ is used to denote its target index.

For example,
the  strings of gates in Figure \ref{pearl2convosimple}(a)
correspond
to: \begin{align}
 \overline{H}\left(  3\right)     \overline{\text{CPHASE}}%
\left(  1,2D\right)  \overline{\text{CNOT}}\left(  1,3\right),
\end{align}
 $b_1=1$, $a_2=1,$ $b_2=2,$ $l_2=1,$ $a_3=1,$ $b_3=3,$ and $l_3=0.$

Suppose the number of gate strings in the pearl-necklace encoder is $N.$ The members of the sets $I_{\text{CNOT}}^{+}$, $I_{\text{CNOT}}^{-},$ $I_{\text{CPHASE}}^{+},$ and $I_{\text{CPHASE}}^{-}$ are the indices of gate strings in the encoder which are $\overline{\text{CNOT}}$ with non-negative degree, $\overline{\text{CNOT}}$ with negative degree, $\overline{\text{CPHASE}}$ with non-negative degree and $\overline{\text{CPHASE}}$ with negative degree respectively:
\[I_{\text{CNOT}}^{+}=\{i|\,\overline{U}_i\,\, \text{is}\,\,\overline{\text{CNOT}},l_i\geq 0, i\in \{1,2,\cdots,N\}\},\]
\[I_{\text{CNOT}}^{-}=\{i|\,\overline{U}_i\,\, \text{is}\,\,\overline{\text{CNOT}},l_i< 0, i\in \{1,2,\cdots,N\}\},\]
\[I_{\text{CPHASE}}^{+}=\{i|\,\overline{U}_i \,\,\text{is}\,\,\overline{\text{CPHASE}},l_i\geq 0, i\in \{1,2,\cdots,N\}\},\]
\[I_{\text{CPHASE}}^{-}=\{i|\,\overline{U}_i \,\,\text{is}\,\,\overline{\text{CPHASE}},l_i< 0, i\in \{1,2,\cdots,N\}\}.\]
The members of the sets $I_{\text{H}}$ and $I_{\text{P}}$ are the indices of gate strings of the encoder which are $\overline{H}$ and $\overline{P}$ respectively:
\[I_{\text{H}}=\{i|\,\overline{U}_i \,\,\text{is}\,\,\overline{H}, i\in \{1,2,\cdots,N\}\},\]
\[I_{\text{P}}=\{i|\,\overline{U}_i \,\,\text{is}\,\,\overline{P}, i\in \{1,2,\cdots,N\}\}.\]

Our convention for numbering the
frames upon which the unitary of a convolutional encoder acts is from
\textquotedblleft bottom\textquotedblright\ to \textquotedblleft
top.\textquotedblright\ Figure \ref{graph-convo1}(b) illustrates this convention for a convolutional encoder. If ${U}_i$ is CNOT or CPHASE gate, then let $\sigma_{i}$ and $\tau_{i}$ denote the frame index of the
respective source and target qubits of the ${U}_i$ gate in a
convolutional encoder. If $U_i$ is Hadamard or Phase gate, let $\tau_{i}$ denote the frame index of the target qubit of the ${U}_i$ gate in a convolutional encoder. For example, consider the convolutional encoder in Figure \ref{graph-convo1}(b). The convolutional encoder in this figure consists of six gates; $\tau_{1}=0,$ $\tau_{2}=0,$ $\sigma_{3}=0,$ $\tau_{3}=1,$ $ \sigma_{4}=2,$ $\tau_{4}=0,$ $ \sigma_{5}=3,$ $\tau_{5}=2,$ $\sigma_{6}=4, $ and $\tau_{6}=3.$

While referring to a convolutional encoder, the following notation are defined as follows:\\
The notation CNOT$(a,b)\left(  \sigma,\tau \right)$ denotes a CNOT gate from qubit $a$ in
frame $\sigma$ to qubit $b$ in frame $\tau$.\\
The notation CPHASE$(a,b)\left(  \sigma,\tau \right) $ denotes a CPHASE gate from qubit $a$ in
frame $\sigma$ to qubit $b$ in frame $\tau$.\\
The notation $H(b)(\tau)$ denotes a Hadamard gate which acts on qubit $b$ in frame $\tau.$\\
The notation $P(b)(\tau)$ denotes a Phase gate which acts on qubit $b$ in frame $\tau.$

For example the gates in Figure \ref{graph-convo1}(b) correspond to:\begin{align}
 &H\left(  1\right)(0)   P\left(  1\right)(0)  \text{CPHASE}%
\left(  1,2\right)\left(0,1 \right) \text{CPHASE}%
\left(  2,3\right) \left( 2,0\right)
  \text{  CNOT}\left(
3,2\right)  \left( 3,2 \right)  \text{CNOT}\left(  2,3\right)  \left( 4,3 \right)
\nonumber.
\end{align}
\section{Memory requirements for an arbitrary pearl-necklace encoder}
\label{memory-general}As discussed before, for finding the practical realization of a pearl-necklace encoder it is required to rearrange the gates as a convolutional encoder.

To do this rearrangement, we must first find a
set of gates consisting of a single gate for each gate string
in the pearl-necklace encoder
 such that all the gates that remain after the set
commute with it. Then we can shift all these gates to the right and infinitely repeat this operation on
the remaining gates to obtain a convolutional encoder. When all gates in the pearl-necklace encoder commute with each other, there is no constraint on frame indices of target (source) qubits of gates in the convolutional encoder \cite{ourpaper}. (Figure~\ref{pearl2convosimple} shows an  example
of the rearrangement of commuting gate strings into a convolutional encoder.) On the other hand, when the gate strings do not commute, the constraint of commutativity of the remaining gates with the chosen set results in constraints on frame indices of target (source) qubits of gates in the convolutional encoder.

In the following sections, after defining different types of non-commutativity and their imposed constraints, the algorithm for finding the minimal-memory convolutional encoder for an arbitrary pearl-necklace encoder is presented.

 \begin{figure}
[ptb]
\begin{center}
\includegraphics[
natheight=9.879900in,
natwidth=19.139999in,
height=2.35in,
width=6.24in
]
{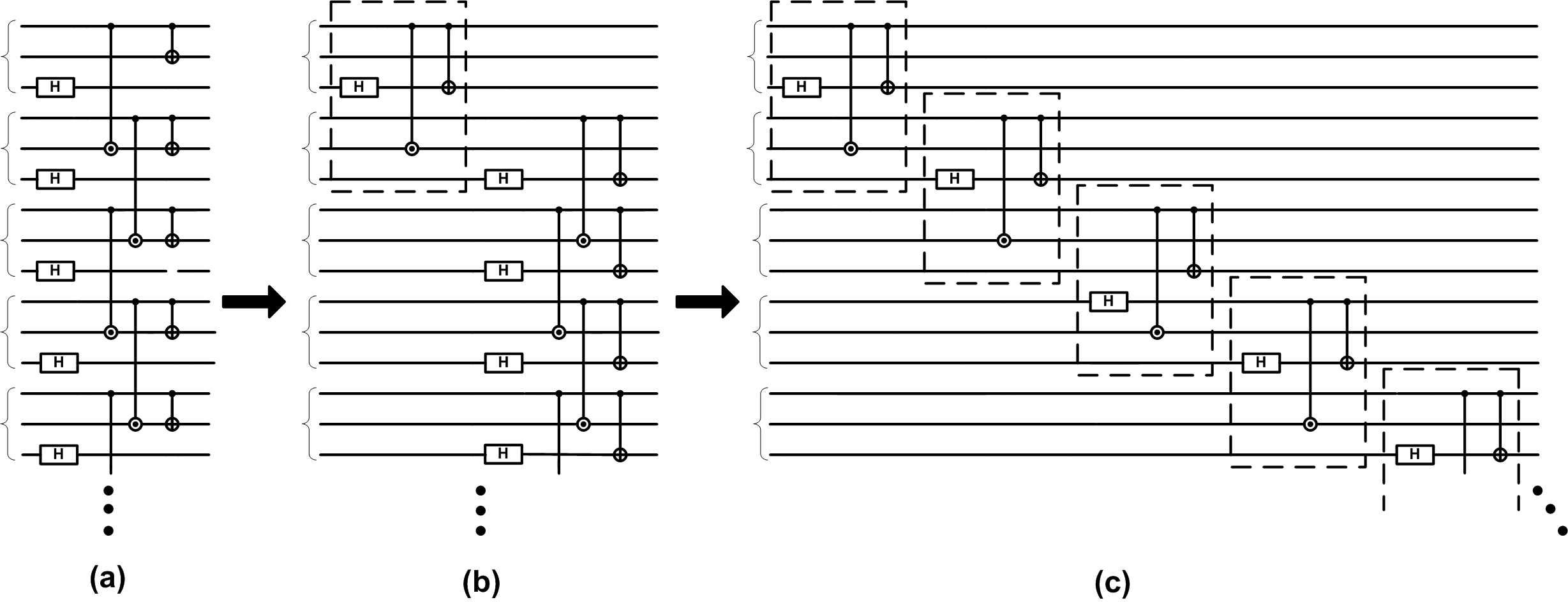}
\end{center}
\caption{Simple (since all gate strings commute with each other) example of the rearrangement of a pearl-necklace encoder for a non-CSS code
into a convolutional encoder. (a) The pearl-necklace encoder consists of
the gate strings
$\overline{H}\left(  1\right)     \overline{\text{CPHASE}}
\left(  1,2\right)\left(  D\right)  \overline{\text{CNOT}}\left(  1,3\right)  $.
(b) The
rearrangement of gates after the first three by shifting them
to the right. (c) The repeated application of the procedure in (b) realizes a
convolutional encoder from a pearl-necklace encoder.}
\label{pearl2convosimple}
\end{figure}

\subsection{Different types of non-commutativity and their imposed constraints}
%
 There may arise three types of non-commutativity for any two gate strings of shift-invariant Clifford: source-target non-commutativity, target-source non-commutativity and target-target non-commutativity. Each imposes a different constraint on frame indices of gates in the convolutional encoder. These types of non-commutativity and their constraints are explained in the following sections.
\subsubsection{Source-target non-commutativity}
The gate strings in~(\ref{cn-cn-s-t1}-\ref{cp-h-s-t1}) below do not commute with each other. In all of them, the index of each source qubit in the first gate string is the same as the index of each target qubit in the second gate string, therefore we call this type of non-commutativity  \emph{source-target non-commutativity}.
\begin{equation}
\overline{\text{CNOT}}(a,bD^{l}) \overline{\text{CNOT}}(a^{\prime},b^{\prime}D^{l^{\prime}}),\,\,\text{where}\,\, a=b^{\prime},\label{cn-cn-s-t1}
\end{equation}
\begin{equation}
\overline{\text{CPHASE}}(a,bD^{l}) \overline{\text{CNOT}}(a^{\prime},b^{\prime}D^{l^{\prime}}),\,\,\text{where}\,\,a=b^{\prime}, \label{cp-cn-s-t1}
\end{equation}
\begin{equation}
\overline{\text{CNOT}}(a,bD^{l})\overline{H}(b^{\prime}),\,\,\text{where}\,\, a=b^{\prime},\label{cn-h-s-t1}
\end{equation}
\begin{equation}
\overline{\text{CPHASE}}(a,bD^{l})\overline{H}(b^{\prime}),\,\,\text{where}\,\, a=b^{\prime}.\label{cp-h-s-t1}
\end{equation}

With an analysis similar to the analysis in Section 3.1 of~\cite{ourpaper}, it can be proved that the following inequality applies to any correct choice of a convolutional encoder that implements either of the transformations in~(\ref{cn-cn-s-t1}-\ref{cp-h-s-t1}):
\begin{equation}
\sigma\leq \tau^{\prime}, \label{s-t}
\end{equation}
where $\sigma$ and $\tau^{\prime}$ denote the frame index of the source qubit of the first gate and the frame index of the target qubit of the second gate in a convolutional encoder respectively.
We call the inequality in~(\ref{s-t}), \emph{source-target constraint.}

As an example, the gate strings of the pearl-necklace encoder, $\overline{\text{CPHASE}}(2,3D)\overline{\text{CNOT}}(1,2D)$, (Figure \ref{constraintex}(a))
have source-target non-commutativity. A correct choice of convolutional encoder is (the encoder depicted over a first arrow in the Figure \ref{constraintex}):
\begin{equation}
\text{CPHASE}(2,3)(1,0)\text{CNOT}(1,2)(2,1).
\end{equation}
In this case $\sigma=1\leq \tau^{\prime}=1.$ Since the source-target constraint is satisfied the remaining gates after the chosen set in Figure\ref{constraintex}(b) can be shifted to the right. Repeated application of the procedure in (b) realizes a convolutional encoder representation from
a pearl-necklace encoder(Figure\ref{constraintex}(c)).
 \begin{figure}
[ptb]
\begin{center}
\includegraphics[
natheight=9.879900in,
natwidth=19.139999in,
height=2.35in,
width=5.03in
]
{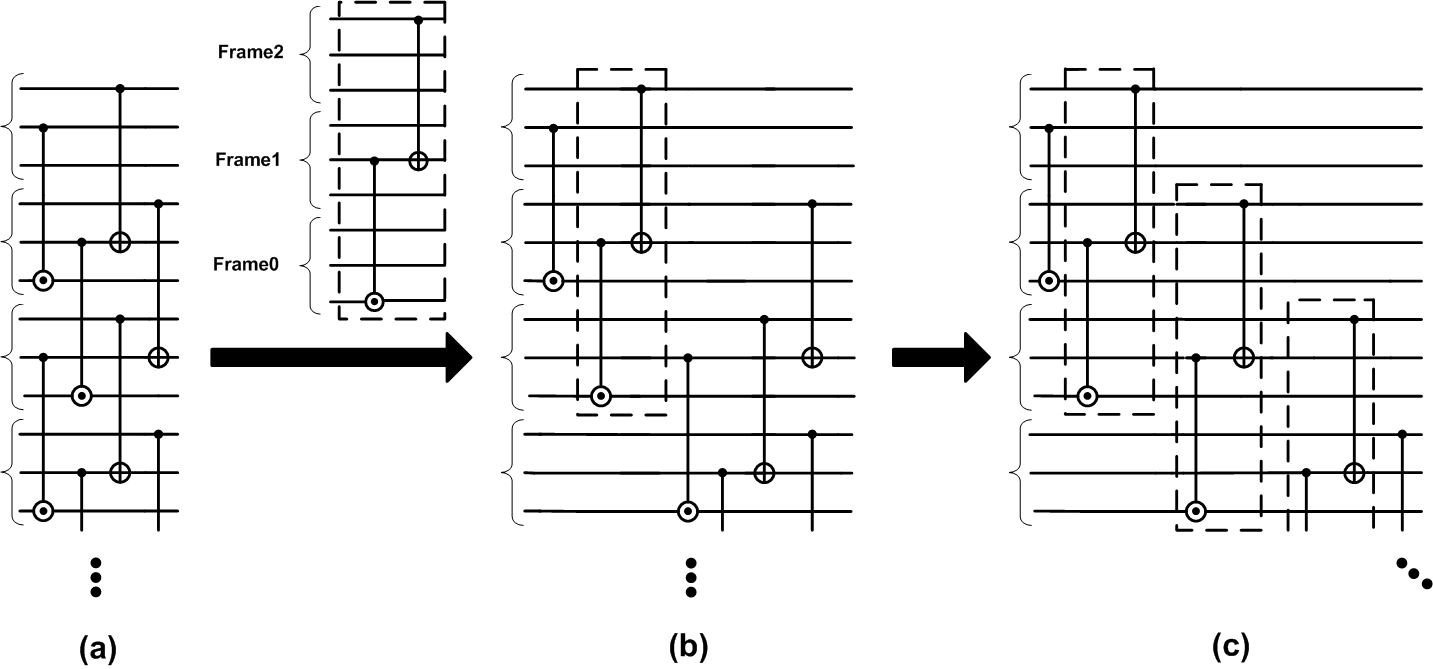}
\end{center}
\caption{Finding a correct choice for a two non-commutative gate strings . (a) The pearl-necklace encoder consists of
the gate strings
$\overline{\text{CPHASE}}(2,3D)\overline{\text{CNOT}}(1,2D)$, which have source-target non-commutativity.
(b) The
rearrangement of gates after the first three by shifting them
to the right. (c) The repeated application of the procedure in (b) realizes a
convolutional encoder from a pearl-necklace encoder.}
\label{constraintex}
\end{figure}

The following Boolean function is used to determine whether this type of non-commutativity exists for two gate strings:
\[\text{Source-Target}\left(\overline{U}_i,\overline{U}_j \right).\]
This function takes two gate strings $\overline{U}_i$ and $\overline{U}_j$ as
input. It returns TRUE if $\overline{U}_i$ and
$\overline{U}_j$ have source-target
non-commutativity and returns FALSE otherwise.
\subsubsection{Target-source non-commutativity}
It is obvious that the gate strings in~(\ref{cn-cn-t-s1}-\ref{h-cp-t-s1}) do not commute. In all of them, the index of each target qubit in the first gate string
is the same as the index of each source qubit in the second gate string. Therefore we call this type of non-commutativity, \emph{target-source non-commutativity}.
\begin{equation}
\overline{\text{CNOT}}(a,bD^{l})\overline{\text{CNOT}}(a^{\prime},b^{\prime}D^{l^{\prime}}),\,\,\text{where}\,\, b=a^{\prime}, \label{cn-cn-t-s1}
\end{equation}
\begin{equation}
\overline{\text{CNOT}}(a,bD^{l}) \overline{\text{CPHASE}}(a^{\prime},b^{\prime}D^{l^{\prime}}),\,\, \text{where}\,\, b=a^{\prime},\label{cn-cp-t-s1}
\end{equation}
\begin{equation}
\overline{H}(b)\overline{\text{CNOT}}(a^{\prime},b^{\prime}D^{l^{\prime}}),\,\,\text{where}\,\, b=a^{\prime}, \label{h-cn-t-s1}
\end{equation}
\begin{equation}
\overline{H}(b)\overline{\text{CPHASE}}(a^{\prime},b^{\prime}D^{l^{\prime}}),\,\,\text{where}\,\, b=a^{\prime}.\label{h-cp-t-s1}
\end{equation}

With an analysis similar to the analysis in Section 3.1 of~\cite{ourpaper}, it can be proved that the following inequality applies to any correct choice of a convolutional encoder that implements either of the transformations in~(\ref{cn-cn-t-s1}-\ref{h-cp-t-s1}):
\begin{equation}
\tau\leq \sigma^{\prime},\label{t-s}
\end{equation}
where $\tau$ and $\sigma^{\prime}$ denote the frame index of the target qubit of the first gate and the frame index of the source qubit of the second gate in a convolutional encoder respectively. We call the inequality in~(\ref{t-s}), \emph{target-source constraint.}

The following Boolean function is used to determine whether target-source non-commutativity exists for two gate strings:
\[\text{Target-Source}\left(\overline{U}_i,\overline{U}_j \right).\]
This function takes two gate strings $\overline{U}_i$ and $\overline{U}_j$ as
input. It returns TRUE if $\overline{U}_i$ and
$\overline{U}_j$ have target-source
non-commutativity and returns FALSE otherwise.
\subsubsection{Target-target non-commutativity}
It is obvious that the gate strings in~(\ref{cp-cn-t-t1}-\ref{h-p-t-t1}) do not commute. In all of them, the index of each target qubit in the first gate string is the same as the index of each target qubit in the second gate string. Therefore we call this type of non-commutativity, \emph{target-target non-commutativity}.
\begin{equation}
\overline{\text{CPHASE}}(a,bD^{l})\ \overline{\text{CNOT}}(a^{\prime},b^{\prime}D^{l^{\prime}}),\,\,\text{where}\,\,b=b^{\prime}, \label{cp-cn-t-t1}
\end{equation}
\begin{equation}
\overline{\text{CNOT}}(a,bD^{l}) \overline{\text{CPHASE}}(a^{\prime},b^{\prime}D^{l^{\prime}}),\,\, \text{where}\,\, b=b^{\prime},\label{cn-cp-t-t1}
\end{equation}
\begin{equation}
\overline{\text{CNOT}}(a,bD^{l})\overline{H}(b),\,\,\text{where}\,\, b=b^{\prime}, \label{cn-h-t-t1}
\end{equation}
\begin{equation}
\overline{\text{CPHASE}}(a,bD^{l})\overline{H}(b^{\prime}),\,\,\text{where}\,\, b=b^{\prime}, \label{cp-h-t-t1}
\end{equation}
\begin{equation}
\overline{H}(b)\overline{\text{CNOT}}(a^{\prime},b^{\prime}D^{l^{\prime}}),\,\,\text{where}\,\, b=b^{\prime},\label{h-cn-t-t1}
\end{equation}
\begin{equation}
\overline{H}(b)\overline{\text{CPHASE}}(a^{\prime},b^{\prime}D^{l^{\prime}}),\,\,\text{where}\,\, b=b^{\prime},\label{h-cp-t-t1}
\end{equation}
 \begin{equation}
 \overline{\text{CNOT}}(a,bD^{l})\overline{P}(b^{\prime}),\,\,\text{where}\,\, b=b^{\prime}, \label{cn-p-t-t1}
 \end{equation}
 \begin{equation}
 \overline{P}(b)\overline{\text{CNOT}}(a^{\prime},b^{\prime}D^{l^{\prime}}),\,\,\text{where}\,\, b=b^{\prime},\label{p-cn-t-t1}
 \end{equation}
 \begin{equation}
 \overline{P}(b)\overline{H}(b^{\prime}),\,\, \text{where}\,\, b=b^{\prime},\label{p-h-t-t1}
 \end{equation}
\begin{equation}
\overline{H}(b)\overline{P}(b^{\prime}),\,\,\text{where} \,\, b=b^{\prime}. \label{h-p-t-t1}
\end{equation}

With a analysis similar to the analysis in Section 3.1 of~\cite{ourpaper}, it can be proved that the following inequality applies to any correct choice of a convolutional encoder that implements either of the transformations in~(\ref{cp-cn-t-t1}-\ref{h-p-t-t1}):
\begin{equation}
\tau\leq \tau^{\prime},\label{t-t}
\end{equation}
where $\tau$ and $\tau^{\prime}$ denote the frame index of the target qubit of the first gate and the frame index of the target qubit of the second gate in a convolutional encoder respectively. We call the inequality in (\ref{t-t}), \emph{target-target constraint.}
The following Boolean function is used to determine whether target-target non-commutativity exists for two gate strings:
\[\text{Target-Target}\left(\overline{U}_i,\overline{U}_j \right)=\text{TRUE}.\]
This function takes two gate strings $\overline{U}_i$ and $\overline{U}_j$ as
input. It returns TRUE if $\overline{U}_i$ and
$\overline{U}_j$ have target-target
non-commutativity and returns FALSE otherwise.

Consider the $j^{\text{th}}$ gate string, $\overline{U}_j$ in the encoder. It is important
to consider the gate strings preceding this one that do not commute with this gate string and categorize them based on the type of non-commutativity. Therefore we define the following sets:
\[
(\mathcal{S-T})_{j}  =\{i\mid
\text{Source-Target}(\overline{U}_i,\overline{U}_j)=\text{TRUE},i\in\{1,2,\cdots,j-1\}\},
\]
\[
(\mathcal{T-S})_{j}  =\{i\mid
\text{Target-Source}(\overline{U}_i,\overline{U}_j)=\text{TRUE},i\in\{1,2,\cdots,j-1\}\},
\]
\[
(\mathcal{T-T})_{j}  =\{i\mid
\text{Target-Target}(\overline{U}_i,\overline{U}_j)=\text{TRUE},i\in\{1,2,\cdots,j-1\}\}.
\]

\subsection{The proposed algorithm for finding minimal memory requirements for an arbitrary pearl-necklace encoder}
In this section we find the minimal-memory realization for an arbitrary pearl-necklace encoder which include all gate strings that are in shift-invariant Clifford group: Hadamard gates, Phase gates, controlled-phase and finite depth and infinite-depth controlled-NOT gate strings.
 To achieve this goal, we consider any non-commutativity that may exist for a particular gate and its preceding gates.
Suppose that
a pearl-necklace encoder features the following succession of $N$ gate strings:
\begin{equation}
\overline{U_1},\overline{U_2},\cdots,\overline{U_N}. \label{encoder}
\end{equation}
If the first gate string is $\overline{\text{CNOT}}(a_1,b_1D^{l_1}),l_1\geq 0$, the first gate in the convolutional encoder is
\begin{equation}
\text{CNOT}(a_1,b_1)(\sigma_1=l_1,\tau_1=0).\label{cn+}
\end{equation}
If the first gate string is $\overline{\text{CNOT}}(a_1,b_1D^{l_1}),l_1< 0$ the first gate in the convolutional encoder is
\begin{equation}
\text{CNOT}(a_1,b_1)(\sigma_1=0,\tau_1=|l_1|).\label{cn-}
\end{equation}
If the first gate string is $\overline{\text{CPHASE}}(a_1,b_1D^{l_1}),l_1\geq 0$
 the first gate in the convolutional encoder is
  \begin{equation}
  \text{CPHASE}(a_1,b_1)(\sigma_1=l_1,\tau_1=0).\label{cp+}
  \end{equation}
If the first gate string is $\overline{\text{CPHASE}}(a_1,b_1D^{l_1}),l_1< 0$ the first gate in the convolutional encoder is
 \begin{equation}
\text{CPHASE}(a_1,b_1)(\sigma_1=0,\tau_1=|l_1|).\label{cp-}
\end{equation}
If the first gate string is $\overline{H}(b_1)$ or $\overline{P}(b_1)$ the first gate in the convolutional encoder is as follows respectively:
\begin{equation}
H(b_1)(0),\label{h}
\end{equation}
\begin{equation}
P(b_1)(0).\label{p}
\end{equation}
For the
target indices of each gate $j$ where $2 \leq j \leq N$, we should choose a value for $\tau_{j}$ that satisfies all the
 constraints that the gates preceding it impose.

First consider $\overline{U_j}$ is the $\overline{\text{CNOT}}$ or $\overline{\text{CPHASE}}$ gate, then
the following inequalities must be satisfied to target index of $\overline{U_j},$ $\tau_{j}$:

By applying the source-target constraint in (\ref{s-t}) we have:
\begin{align}
\sigma_{i}  &  \leq\tau_{j}\,\,\,\,\,\,\forall i\in(\mathcal{S-T})_{j}, \nonumber\\
\therefore\,\,\,\tau_{i}+l_i  &  \leq\tau_{j} \,\,\,\,\,\,\forall i\in(\mathcal{S-T})_{j},\nonumber\\
  \therefore\,\,\max\{\tau_{i}+l_{i}\}_{i\in(\mathcal{S-T})_{j}} &  \leq\tau_{j},        \label{cn-cp-s-t}
\end{align}
by applying the target-source constraint in (\ref{t-s}) we have:
\begin{align}
\tau_{i}  &  \leq\sigma_{j}\,\,\,\,\,\,\forall i\in(\mathcal{T-S})_{j}\nonumber\\
\therefore\,\,\,\tau_{i}  &  \leq\tau_{j}+l_j \,\,\,\,\,\,\forall i\in(\mathcal{T-S})_{j}, \nonumber\\
\therefore\,\,\,\tau_{i}-l_j  &  \leq\tau_{j}\,\,\,\,\,\,\forall i\in(\mathcal{T-S})_{j},\nonumber\\
\therefore\,\,\max\{\tau_{i}-l_{j}\}_{i\in(\mathcal{T-S})_{j}} &  \leq\tau_{j}.
 \label{cn-cp-t-s}
\end{align}
By applying the target-target constraint in (\ref{t-t}) we have:
\begin{align}
\tau_{i}  &  \leq\tau_{j}\,\,\,\,\,\,\forall i\in(\mathcal{T-T})_{j}\nonumber\\
\therefore\,\,\max\{\tau_{i}\}_{i\in(\mathcal{T-T})_{j}} &  \leq\tau_{j}. \label{cn-cp-t-t}
\end{align}
The following constraint applies to the frame index $\tau_j$ of the target qubit by applying (\ref{cn-cp-s-t}-\ref{cn-cp-t-t}):
\begin{align}
\max\{{\{\tau_i+l_i\}}_{i\in(\mathcal{S-T})_{j}},{\{\tau_i-l_j\}}_{i\in (\mathcal{T-S})_{j}},\{\tau_{i}\}_{i\in (\mathcal{T-T})_{j}}\} \leq\tau_{j}.\,\,\,\,\,\,
\end{align}
Thus, the minimal value for $\tau_{j}$ (which corresponds to the minimal-memory realization) that satisfies all the constraints is:
\begin{align}
\tau_{j}=\max\{{\{\tau_i+l_i\}}_{i\in(\mathcal{S-T})_{j}},{\{\tau_i-l_j\}}_{i\in (\mathcal{T-S})_{j}},\{\tau_{i}\}_{i\in (\mathcal{T-T})_{j}}\} .\,\,\,\,\,\, \label{cn-cp-min1}
\end{align}
It can be easily shown that there is no constraint for the frame index $\tau_{j}$ if the gate
string $\overline{U_j}$ commutes with all
previous gate strings. Thus if $l_j\geq0$ we choose the frame index $\tau
_{j}$ as follows:%
\begin{equation}
\tau_{j}=0.\label{cn-cp+min2}%
\end{equation}
and if $l_j<0$ we choose $\tau_{j}$ as follows:
\begin{equation}
\tau_{j}=|l_j|.\label{cn-cp-min2}%
\end{equation}
If $l_j\geq0,$ a good choice for the frame index $\tau_{j},$ by considering (\ref{cn-cp-min1}) and (\ref{cn-cp+min2}) is as follows:
\begin{align}
\tau_{j}=\max\{0,{\{\tau_i+l_i\}}_{i\in(\mathcal{S-T})_{j}},{\{\tau_i-l_j\}}_{i\in (\mathcal{T-S})_{j}},\{\tau_{i}\}_{i\in (\mathcal{T-T})_{j}}\} .\,\,\,\,\,\, \label{cn-cp+}
\end{align}
and if $l_j<0,$ a good choice for the frame index $\tau_{j},$ by considering (\ref{cn-cp-min1}) and (\ref{cn-cp-min2}) is as follows:
\begin{align}
\tau_{j}=\max\{|l_j|,{\{\tau_i+l_i\}}_{i\in(\mathcal{S-T})_{j}},{\{\tau_i-l_j\}}_{i\in (\mathcal{T-S})_{j}},\{\tau_{i}\}_{i\in (\mathcal{T-T})_{j}}\} .\,\,\,\,\,\, \label{cn-cp-}
\end{align}
Now consider $\overline{U_j}$ is the $\overline{H}$, then
the following inequalities must be satisfied to target index of $\overline{U_j},$ $\tau_{j}$:

By applying the source-target constraint in (\ref{s-t}) we have:
\begin{align}
\sigma_{i}  &  \leq\tau_{j}\,\,\,\,\,\,\forall i\in(\mathcal{S-T})_{j}, \nonumber\\
\therefore\,\,\,\tau_{i}+l_i  &  \leq\tau_{j} \,\,\,\,\,\,\forall i\in(\mathcal{S-T})_{j},\nonumber\\
  \therefore\,\,\max\{\tau_{i}+l_{i}\}_{i\in(\mathcal{S-T})_{j}} &  \leq\tau_{j}.        \label{h-p-s-t}
\end{align}
By applying target-target constraint in (\ref{t-t}) we have:
\begin{align}
\tau_{i}  &  \leq\tau_{j}\,\,\,\,\,\,\forall i\in(\mathcal{S-T})_{j}\nonumber\\
\therefore\,\,\max\{\tau_{i}\}_{i\in(\mathcal{T-T})_{j}} &  \leq\tau_{j}, \label{h-p-t-t}
\end{align}
The following constraint applies to the frame index $\tau_j$ of the target qubit  by applying ($\ref{h-p-s-t}$) and ($\ref{h-p-t-t}$):
\begin{align}
\max\{{\{\tau_i+l_i\}}_{i\in(\mathcal{S-T})_{j}},\{\tau_{i}\}_{i\in (\mathcal{T-T})_{j}}\} \leq\tau_{j}\nonumber.
\end{align}
Thus, the minimal value for $\tau_{j}$ (which corresponds to the minimal-memory realization) that satisfies all the constraints is:
\begin{align}
\tau_{j}=\max\{{\{\tau_i+l_i\}}_{i\in(\mathcal{S-T})_{j}},\{\tau_{i}\}_{i\in (\mathcal{T-T})_{j}}\} .\label{H-min1}
\end{align}
It can be easily shown that there is no constraint for the frame index $\tau_{j}$ if the gate
string $\overline{U_j}$ commutes with all
previous gate strings. Thus, in this case, we choose the frame index $\tau
_{j}$ as follows:%
\begin{equation}
\tau_{j}=0.\label{H-min2}%
\end{equation}
A good choice for the frame index $\tau_{j},$ by considering (\ref{H-min1}) and (\ref{H-min2}) is as follows:
\begin{align}
\tau_{j}=\max\{{0,\{\tau_i+l_i\}}_{i\in(\mathcal{S-T})_{j}},\{\tau_{i}\}_{i\in (\mathcal{T-T})_{j}}\}.
\end{align}
Now consider $\overline{U_j}$ is the $\overline{P}$, then
by applying the target-target constraint in (\ref{t-t}), the following inequality must be satisfied to target index of $\overline{U_j},$ $\tau_{j}$:
\begin{align}
\tau_{i}  &  \leq\tau_{j}\,\,\,\,\,\,\forall i\in(\mathcal{T-T})_{j}\nonumber\\
\therefore\,\,\max\{\tau_{i}\}_{i\in(\mathcal{T-T})_{j}} &  \leq\tau_{j}\nonumber.
\end{align}
Thus, the minimal value for $\tau_{j}$ (which corresponds to the minimal-memory realization) that satisfies all the constraints is:
\begin{align}
\tau_{j}=\max\{\tau_{i}\}_{i\in(\mathcal{T-T})_{j}}. \label{P-min1}
\end{align}
It can be easily shown that there is no constraint for the frame index $\tau_{j}$ if the gate
string $\overline{U_j}$ commutes with all
previous gate strings. Thus, in this case, we choose the frame index $\tau_{j}$ as follows:%
\begin{equation}
\tau_{j}=0.\label{P-min2}%
\end{equation}
A good choice for the frame index $\tau_{j},$ by considering (\ref{P-min1}) and (\ref{P-min2}) is as follows:
\begin{align}
\tau_{j}=\max\{0,\{\tau_{i}\}_{i\in (\mathcal{T-T})_{j}}\}.
\end{align}
\subsubsection{Construction of the non-commutativity graph}
We introduce the notion of a \emph{non-commutative} graph, $\mathcal{G}$ in order to find the
best values for the target qubit
frame indices. The graph is a weighted, directed acyclic graph constructed
from the non-commutativity relations of the gate strings
in~(\ref{encoder}). Algorithm \ref{graph} presents pseudo code for constructing the non-commutativity graph.
\begin{algorithm}
\caption{Algorithm for determining the non-commutativity graph $\mathcal{G}$ for general case}
\label{graph}
\begin{algorithmic}
\STATE$N \gets$ Number of gate strings in the pearl-necklace encoder.
\STATE Draw a {\bf START} vertex.
\FOR{$j := 1$ to $N$}
\STATE Draw a vertex labeled $j$ for the $j^{\text{th}}$ gate string $\overline{U}_j$
\IF{$j\in (I_{\text{CNOT}}^{-}\cup I_{\text{CPHASE}}^{-})$}
\STATE DrawEdge({\bf START}, $j$, $|l_j|$)
\ELSE
 \STATE DrawEdge({\bf START}, $j$, 0)
 \ENDIF
\FOR{$i$ := $1$ to $j-1$}
\IF{$j \in (I_{\text{CNOT}}^{+}\cup I_{\text{CPHASE}}^{+}\cup I_{\text{CNOT}}^{-}\cup I_{\text{CPHASE}}^{-}) $}
\IF{$i\in (\mathcal{S-T})_{j}$}
\STATE DrawEdge($i,j,l_i$ )
\ENDIF
\IF{$i\in (\mathcal{T-S})_{j}$}
\STATE DrawEdge($i,j,-l_j$)
\ENDIF
\IF{$i\in (\mathcal{T-T})_{j}$}
\STATE DrawEdge($i,j,0$)
\ENDIF
\ELSE
\IF{$j\in I_{H}$}
\IF{$i\in (\mathcal{S-T})_{j}$}
\STATE DrawEdge($i,j,l_i$)
\ENDIF
\IF{$i\in (\mathcal{T-T})_{j}$}
\STATE DrawEdge($i,j,0$)
\ENDIF
\ELSE
\IF{$i\in (\mathcal{T-T})_{j}$}
\STATE DrawEdge($i,j,0$)
\ENDIF
\ENDIF
\ENDIF
\ENDFOR
\ENDFOR
\STATE Draw an {\bf END} vertex.
\FOR{$j$ := $1$ to $N$}
\IF{$j\in (I_{\text{CNOT}}^+ \cup I_{\text{CPHASE}}^+$)} \STATE DrawEdge($j$,{\bf END}, $l_j$)
\ELSE \STATE DrawEdge($j$,{\bf END}, $0$)
\ENDIF
\ENDFOR
\end{algorithmic}
\end{algorithm}
 $\mathcal{G}$ consists of $N$ vertices, labeled
$1,2,\cdots,N$, where the $j^{\text{th}}$ vertex corresponds to the
$j^{\text{th}}$ gate string $\overline{U}_j$.
It also has two dummy vertices, named \textquotedblleft
START\textquotedblright\ and \textquotedblleft END.\textquotedblright%
\ DrawEdge$\left(  i,j,w\right)  $ is a function that draws a directed edge
from vertex $i$ to vertex $j$ with an edge weight equal to $w$.
\subsubsection{The longest path gives the minimal memory requirements}
Theorem~\ref{thm:longest-path-is-memory} below states that the weight of the
longest path from the START vertex to the END vertex is equal to the minimal memory
required for a convolutional encoder implementation.
\begin{theorem}
\label{thm:longest-path-is-memory}The weight $w$\ of the longest path from
the START vertex to END vertex in the non-commutativity graph $\mathcal{G}$ is equal
to the minimal memory $L$ that the convolutional encoder requires.
\end{theorem}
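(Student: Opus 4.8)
The plan is to prove the theorem in two stages: first show that for every gate-string vertex $j$ the weight of the longest path from START to $j$ equals the minimal target frame index $\tau_j$ computed in~(\ref{cn-cp+}),~(\ref{cn-cp-}) and their $\overline{H}$, $\overline{P}$ analogues; then show that appending the END edges turns this longest path into the largest frame index occurring anywhere in the encoder, which is exactly the minimal memory $L$. Since every edge drawn by Algorithm~\ref{graph} either leaves START, enters END, or goes from a vertex $i$ to a vertex $j$ with $i<j$, the graph $\mathcal{G}$ is acyclic, so the longest-path weight $P(v)$ from START to any vertex $v$ is well defined and satisfies the dynamic-programming recurrence $P(v)=\max\{w_{\mathrm{START},v},\,\max_{i\to v}(P(i)+w_{iv})\}$.

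First I would establish $P(j)=\tau_j$ by induction on $j$. For the base case $j=1$ the only edge into vertex $1$ comes from START, with weight $|l_1|$ when $\overline{U}_1$ is a $\overline{\mathrm{CNOT}}/\overline{\mathrm{CPHASE}}$ of negative degree and $0$ otherwise, matching~(\ref{cn+})--(\ref{p}). For the inductive step I would compare the recurrence for $P(j)$ term by term with the formula for $\tau_j$, splitting into the four cases of Algorithm~\ref{graph}. The edge weights were chosen precisely so that an incoming edge from $i\in(\mathcal{S-T})_j$ contributes $P(i)+l_i$, one from $i\in(\mathcal{T-S})_j$ contributes $P(i)-l_j$, and one from $i\in(\mathcal{T-T})_j$ contributes $P(i)$, while the START edge contributes $0$ (or $|l_j|$ in the negative-degree case). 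Substituting the inductive hypothesis $P(i)=\tau_i$ then reproduces exactly~(\ref{cn-cp+}) for $\overline{\mathrm{CNOT}}/\overline{\mathrm{CPHASE}}$ with $l_j\ge 0$,~(\ref{cn-cp-}) for $l_j<0$, the $\overline{H}$ formula (no $(\mathcal{T-S})_j$ edges), and the $\overline{P}$ formula (only $(\mathcal{T-T})_j$ edges), so $P(j)=\tau_j$.

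Next I would analyse the END vertex. Since the END edge out of $j$ has weight $l_j$ when $j\in I_{\mathrm{CNOT}}^{+}\cup I_{\mathrm{CPHASE}}^{+}$ and $0$ otherwise, the longest path to END has weight
$$
P(\mathrm{END})=\max_{1\le j\le N}\big(\tau_j+w_{j,\mathrm{END}}\big).
$$
Using the relation $\sigma_j=\tau_j+l_j$ between source and target frames of a degree-$l_j$ gate, the quantity $\tau_j+w_{j,\mathrm{END}}$ equals $\sigma_j$ when $l_j\ge 0$ (where $\sigma_j\ge\tau_j$) and equals $\tau_j$ otherwise (where $\sigma_j<\tau_j$, or the gate has no source); in every case it is the largest frame index $\max\{\sigma_j,\tau_j\}$ touched by gate $j$. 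Maximising over $j$ therefore yields the largest frame index appearing in the whole convolutional encoder. Because the $\tau_j$ are the minimal admissible target indices permitted by the constraints~(\ref{s-t}),~(\ref{t-s}),~(\ref{t-t}), this is the minimal such maximum, and I would conclude $w=P(\mathrm{END})=L$.

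The routine part is the inductive matching of the two recurrences, which is purely mechanical once the case split is set up. The step carrying the real content, and the one I would be most careful about, is the identification of the memory of a convolutional encoder with the largest frame index $\max_j\max\{\sigma_j,\tau_j\}$: one must argue that retaining every frame from $0$ up to this maximum is both necessary and sufficient to feed the shared qubits back into the next application of the unitary, and that the minimal-$\tau_j$ assignment cannot be improved upon by any other valid assignment, the latter following from each $\tau_j$ having been taken as small as the non-commutativity constraints allow.
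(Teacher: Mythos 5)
Your proposal is correct and follows essentially the same route as the paper: an induction showing that the longest-path weight from START to vertex $j$ equals the minimal target frame index $\tau_j$, followed by the observation that the END edges convert this into $\max_j\max\{\sigma_j,\tau_j\}$, the minimal memory. The only difference is presentational — you are somewhat more explicit than the paper about why the pointwise-minimal $\tau_j$ assignment also minimizes the maximum frame index, a step the paper passes over with ``it is clear that.''
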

\begin{proof}
We first prove by induction that the weight $w_{j}$ of the longest path from
the START vertex to vertex $j$ in the non-commutativity graph $\mathcal{G}$ is%
\begin{equation}
w_{j}=\mathbb{\tau}_{j}. \label{eq:Gwlp}%
\end{equation}
Based on the algorithm, a zero-weight edge connects the START vertex to the first vertex, if $1\in(I_{\text{CNOT}}^{+}\cup I_{\text{CPHASE}}^{+}\cup I_{H}\cup I_{P})$ and in this case based on (\ref{cn+}), (\ref{cp+}), (\ref{h}) and (\ref{p}), $\tau_{1}=0$ therefore
$w_{1}=\tau_{1}=0$. An edge with the weight equal to $|l_1|$ connects the START vertex to the first gate if $1 \in (I_{\text{CNOT}}^{-}\cup I_{\text{CPHASE}}^{-}),$ and based on (\ref{cn-}) and (\ref{cp-}),  $\tau_{1}=|l_1|$ therefore
$w_{1}=\tau_{1}=|l_1|$.
Thus the base step holds for the target index of the first
 gate in a minimal-memory convolutional encoder. Now suppose the property
holds for the target indices of the first $k$ gates in the convolutional
encoder:%
\[
w_{j}=\mathbb{\tau}_{j}\,\,\,\,\,\,\forall j : 1\leq j\leq k.
\]
Suppose we add a new gate string $\overline{U}_{k+1}$ to the pearl-necklace encoder, and
Algorithm~\ref{graph} then adds a new vertex $k+1$ to the
graph $\mathcal{G}.$ Suppose $(k+1)\in (I^{+}_\text{CNOT}\cup I^{+}_\text{CPHASE})$. The following edges are added to $\mathcal{G}$:
\begin{enumerate}
\item A zero-weight edge from the START vertex to vertex $k+1$.

\item An $l_{i}$-weight edge from each vertex $\{i\}_{i\in\mathcal({S-T})_{k+1}}$
to vertex $k+1$.
\item A $-l_{k+1}$-weight edge from each vertex $\{i\}_{i\in\mathcal({T-S})_{k+1}%
}$ to vertex $k+1$.
\item A zero-weight edge from each vertex $\{i\}_{i\in(\mathcal{T-T})_{k+1}%
}$  to vertex $k+1$.
\item An $l_{k+1}$-weight edge from vertex $k+1$ to the END vertex.
\end{enumerate}
It is clear that the following relations hold:%
\begin{align}
w_{k+1}  &  =\max\{0,\{w_{i}+l_{i}\}_{i\in(\mathcal{S-T})_{k+1}},\{w_{i}%
-l_{k+1}\}_{i\in(\mathcal{T-S})_{k+1}},\{w_{i}\}_{i\in(\mathcal{T-T})_{k+1}}\},\nonumber\\
&  =\max\{0,\{\mathbb{\tau}_{i}+l_{i}\}_{i\in(\mathcal{S-T})_{k+1}},\{\mathbb{\tau
}_{i}-l_{k+1}\}_{i\in(\mathcal{T-S})_{k+1}},\{\mathbb{\tau}_{i}\}_{i\in(\mathcal{T-T})_{k+1}}\}. \label{Gwlp2}%
\end{align}
 By applying
(\ref{cn-cp+}) and (\ref{Gwlp2}) we have:%
\[
w_{k+1}=\tau_{k+1}.
\]
In a similar way we can show that if the $U_{k+1}$ is any other gate string of Clifford shift-invariant:
\[w_{k+1}=\tau_{k+1}.\]
The proof of the theorem then follows by considering the following equalities:%
\begin{align}
w  &  =\max\{\max_{i\in (I_{\text{CNOT}}^{+}\cup I_{\text{CPHASE}}^{+}\cup I_{H}\cup I_{P})}\{w_{i}+l_{i}\}, \max_{i\in (I_{\text{CNOT}}^{-}\cup I_{\text{CPHASE}}^{-})}\{w_i\}\} \nonumber\\
&  =\max\{\max_{i\in (I_{\text{CNOT}}^{+}\cup I_{\text{CPHASE}}^{+}\cup I_{H}\cup I_{P})}\{\tau_{i}+l_{i}\}, \max_{i\in (I_{\text{CNOT}}^{-}\cup I_{\text{CPHASE}}^{-})}\{\tau_{i}\}\}\nonumber\\
&  =\max\{\max_{i\in (I_{\text{CNOT}}^{+}\cup I_{\text{CPHASE}}^{+}\cup I_{H}\cup I_{P})}\{\sigma_{i}\}, \max_{i\in (I_{\text{CNOT}}^{-}\cup I_{\text{CPHASE}}^{-})}\{\tau_{i}\}\nonumber\}.
\end{align}
The first equality holds because the longest path in the graph is the maximum of the
weight of the path to the $i^{\text{th}}$ vertex summed with the weight of the
edge to the END\ vertex. The second equality follows by applying
(\ref{eq:Gwlp}). The final equality follows because ${{\sigma}_{i}}=\tau
_{i}+l_{i}$. It is clear that\[\max\{\max_{i\in (I_{\text{CNOT}}^{+}\cup I_{\text{CPHASE}}^{+}\cup I_{H}\cup I_{P})}\{\sigma_{i}\}, \max_{i\in (I_{\text{CNOT}}^{-}\cup I_{\text{CPHASE}}^{-})}\{\tau_{i}\}\},\] is equal to minimal required memory for a minimal-memory convolutional
encoder, hence the theorem holds.
\end{proof}
The final task is to determine the longest path in $\mathcal{G}$. Finding the
longest path in a graph, in general is an NP-complete problem, but in a
weighted, directed acyclic graph requires linear time with dynamic
programming~\cite{cormen}. The non-commutativity graph $\mathcal{G}$ is an acyclic graph because a directed edge
connects each vertex only to vertices for which its corresponding gate comes later in the pearl-necklace encoder.

The running time for the construction of the graph is quadratic in the number of gate strings in the pearl-necklace
encoder. Since in
Algorithm~\ref{graph}, the instructions in the
inner \textbf{for} loop requires constant time $O(1)$. The sum of iterations
of the \textbf{if} instruction in the $j^{\text{th}}$ iteration of the outer
\textbf{for} loop is equal to $j-1$. Thus the running time $T(N)$\ of
Algorithm~\ref{graph} is
\[
T(N)=\sum_{i=1}^{N}{\sum_{k=1}^{j-1}O(1)}=O(N^{2}).
\]
 \begin{figure}
[ptb]
\begin{center}
\includegraphics[
natheight=9.879900in,
natwidth=19.139999in,
height=3.7in,
width=6.37in
]
{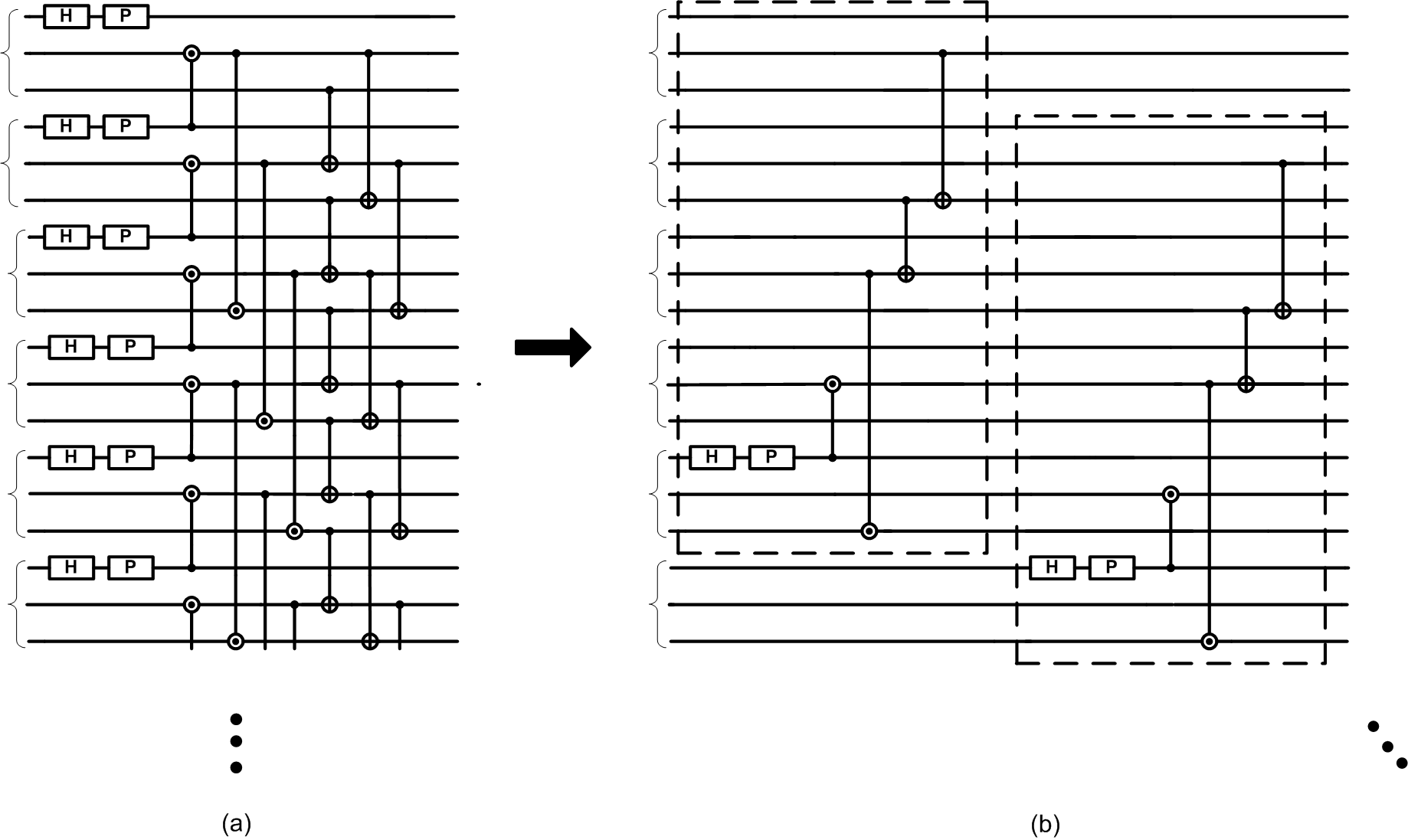}
\end{center}
\caption
{(a) pearl-necklace representation, and (b) minimal-memory convolutional encoder representation for example
1.}
\label{pearl-convo-ex}
\end{figure}%
\begin{figure}
[ptb]
\begin{center}
\includegraphics[
natheight=8.879900in,
natwidth=19.139999in,
height=3.213in,
width=4.61839in
]
{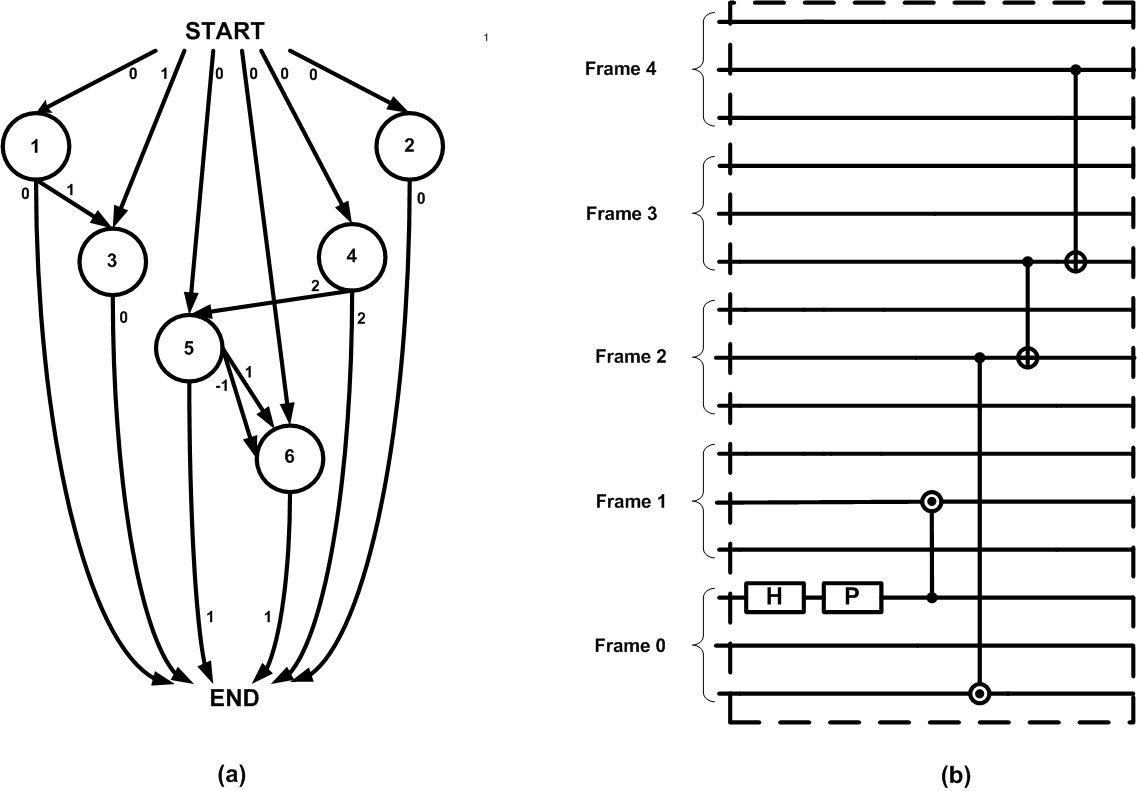}
\end{center}
\caption
{(a) The non-commutativity graph $\mathcal{G}$ and (b) a minimal-memory convolutional encoder for Example
1}
\label{graph-convo1}
\end{figure}%
 Example 1: Consider the following succession of gate strings in a pearl-necklace encoder(Figure~\ref{pearl-convo-ex}(a)):
\begin{align}
 &\overline{H}\left(  1\right)   P\left(  1\right)  \overline{\text{CPHASE}}%
\left(  1,2D^{-1}\right) \overline{\text{CPHASE}}%
\left(  2,3D^2\right)
  \overline{  \text{CNOT}}\left(
3,2D\right)  \overline{\text{CNOT}}\left(  2,3D\right)
\nonumber ,
\end{align}
Figure \ref{graph-convo1}(a) draws $\mathcal{G}$ for this pearl-necklace encoder,
after running Algorithm. The longest path through the graph is
\[
\text{START}\rightarrow4\rightarrow5\rightarrow6\rightarrow\text{END},
\]
with weight equal to four (0+2+1+1). Therefore the minimal memory for the convolutional encoder
is equal to four frames of
memory qubits. Also from inspecting the graph $\mathcal{G}$,
we can determine the locations for all the target qubit frame indices: $\tau_{1}=0,$ $\tau_{2}=0,$ $\tau_{3}=1,$ $\tau_{4}=0,$ $\tau_{5}=2,$ and $\tau_{6}=3.$
 Figure~\ref{graph-convo1}(b) depicts a
minimal-memory convolutional encoder for this example. Figure~\ref{pearl-convo-ex}(b) depicts minimal-memory convolutional encoder representation for the pearl-necklace encoder in Figure~\ref{pearl-convo-ex}(a).

\section{Conclusion}

\label{conclu}
In this paper, we have proposed an algorithm to find a practical realization with a minimal 
memory requirement for a a pearl-necklace encoder of a general quantum convolutional code, 
which includes any gate string in the
shift-invariant Clifford group.
We have shown that the non-commutativity relations of gate strings in the encoder
determine the realization. We introduce a non-commutativity graph, whose each vertex 
corresponds to a gate string in the pearl-necklace encoder. The weighted edges represent 
non-commutativity relations in the encoder. Using the graph, the minimal-memory realization 
can be obtained. The weight of the longest path in the graph is equal to the minimal 
required memory of the encoder. The running time for the construction of the graph and 
finding the longest path is quadratic in the number of gate strings
in the pearl-necklace encoder.

As we mentioned in our previous paper~\cite{ourpaper}, an open question still remains.
The proposed algorithm begins with
a particular pearl-necklace encoder, and finds the minimal required memory for it. 
But one can start with polynomial description of convolutional code and find the minimal 
required memory for the code. There are two problems here to work on: 
(1) finding the pearl-necklace encoder with minimal-memory requirements among all 
pearl-necklace encoders that implement the same code, (2) constructing a repeated
unitary directly from the polynomial description of the code itself, and 
attempting to minimize the memory requirements of realizing this code.

\section*{Acknowledgements}
The authors would like to thank Mark M. Wilde for his helpful discussions, 
comments and careful reading of an earlier version of this paper.
\bibliographystyle{unsrt}
\bibliography{Ref}

\begin{thebibliography}{10}

\bibitem{ourpaper}
Monireh Houshmand, Saied Hosseini-Khayat, and Mark M.Wilde.
\newblock Minimal memory requirements for pearl necklace encoders of quantum
  convolutional codes.
\newblock {\em arXiv:1004.5179}, 2010.

\bibitem{qecbook}
Frank Gaitan.
\newblock {\em Quantum Error Correction and Fault Tolerant Quantum Computing}.
\newblock CRC Press, Taylor and Francis Group, 2008.

\bibitem{PhysRevLett.81.2594}
Daniel~A. Lidar, Isaac~L. Chuang, and K.~Birgitta Whaley.
\newblock Decoherence-free subspaces for quantum computation.
\newblock {\em Physical Review Letters}, 81(12):2594--2597, September 1998.

\bibitem{thesis97gottesman}
Daniel Gottesman.
\newblock {\em Stabilizer Codes and Quantum Error Correction}.
\newblock PhD thesis, California Institute of Technology, May 1997.
\newblock arXiv:quant-ph/9705052.

\bibitem{book2000mikeandike}
Michael~A. Nielsen and Isaac~L. Chuang.
\newblock {\em Quantum Computation and Quantum Information}.
\newblock Cambridge University Press, 2000.

\bibitem{PhysRevLett.77.793}
Andrew~M. Steane.
\newblock Error correcting codes in quantum theory.
\newblock {\em Physical Review Letters}, 77(5):793--797, July 1996.

\bibitem{PhysRevA.54.1098}
A.~Robert Calderbank and Peter~W. Shor.
\newblock Good quantum error-correcting codes exist.
\newblock {\em Physical Review A}, 54(2):1098--1105, August 1996.

\bibitem{PhysRevLett.78.405}
A.~Robert Calderbank, Eric~M. Rains, Peter~W. Shor, and N.~J.~A. Sloane.
\newblock Quantum error correction and orthogonal geometry.
\newblock {\em Physical Review Letters}, 78(3):405--408, January 1997.

\bibitem{ieee1998calderbank}
A.~Robert Calderbank, Eric~M. Rains, Peter~W. Shor, and N.~J.~A. Sloane.
\newblock Quantum error correction via codes over {GF(4)}.
\newblock {\em IEEE Transactions on Information Theory}, 44:1369--1387, 1998.

\bibitem{PhysRevLett.79.3306}
Paolo Zanardi and Mario Rasetti.
\newblock Noiseless quantum codes.
\newblock {\em Physical Review Letters}, 79(17):3306--3309, October 1997.

\bibitem{mpl1997zanardi}
Paolo Zanardi and Mario Rasetti.
\newblock Error avoiding quantum codes.
\newblock {\em Modern Physics Letters B}, 11:1085--1093, 1997.

\bibitem{PhysRevLett.91.177902}
Harold Ollivier and Jean-Pierre Tillich.
\newblock Description of a quantum convolutional code.
\newblock {\em Physical Review Letters}, 91(17):177902, October 2003.

\bibitem{arxiv2004olliv}
Harold Ollivier and Jean-Pierre Tillich.
\newblock Quantum convolutional codes: Fundamentals.
\newblock {\em arXiv:quant-ph/0401134}, 2004.

\bibitem{isit2006grassl}
Markus Grassl and Martin R\"{o}tteler.
\newblock Noncatastrophic encoders and encoder inverses for quantum
  convolutional codes.
\newblock In {\em Proceedings of the IEEE International Symposium on
  Information Theory}, pages 1109--1113, Seattle, Washington, USA, July 2006.
\newblock arXiv:quant-ph/0602129.

\bibitem{ieee2006grassl}
Markus Grassl and Martin R\"{o}tteler.
\newblock Quantum convolutional codes: Encoders and structural properties.
\newblock In {\em Proceedings of the Forty-Fourth Annual Allerton Conference},
  pages 510--519, Allerton House, UIUC, Illinois, USA, September 2006.

\bibitem{ieee2007grassl}
Markus Grassl and Martin R\"{o}tteler.
\newblock Constructions of quantum convolutional codes.
\newblock In {\em Proceedings of the IEEE International Symposium on
  Information Theory}, pages 816--820, Nice, France, June 2007.
\newblock arXiv:quant-ph/0703182.

\bibitem{isit2005forney}
G.~David Forney and Saikat Guha.
\newblock Simple rate-1/3 convolutional and tail-biting quantum
  error-correcting codes.
\newblock In {\em IEEE International Symposium on Information Theory
  (arXiv:quant-ph/0501099)}, pages 1028--1032, September 2005.

\bibitem{ieee2007forney}
G.~David Forney, Markus Grassl, and Saikat Guha.
\newblock Convolutional and tail-biting quantum error-correcting codes.
\newblock {\em IEEE Transactions on Information Theory}, 53:865--880, 2007.

\bibitem{cwit2007aly}
Salah~A. Aly, Markus Grassl, Andreas Klappenecker, Martin R\"{o}tteler, and
  Pradeep~Kiran Sarvepalli.
\newblock Quantum convolutional {BCH} codes.
\newblock In {\em 10th Canadian Workshop on Information Theory
  (arXiv:quant-ph/0703113)}, pages 180--183, 2007.

\bibitem{arx2007aly}
Salah~A. Aly, Andreas Klappenecker, and Pradeep~Kiran Sarvepalli.
\newblock Quantum convolutional codes derived from {Reed-Solomon} and
  {Reed-Muller} codes.
\newblock In {\em International Symposium on Information Theory
  (arXiv:quant-ph/0701037)}, pages 821--825, Nice, France, June 2007.

\bibitem{arx2007wildeCED}
Mark~M. Wilde, Hari Krovi, and Todd~A. Brun.
\newblock Convolutional entanglement distillation.
\newblock In {\em Proceedings of the IEEE International Symposium on
  Information Theory}, Austin, Texas, USA, June 2010.
\newblock arXiv:0708.3699.

\bibitem{arx2007wildeEAQCC}
Mark~M. Wilde and Todd~A. Brun.
\newblock Entanglement-assisted quantum convolutional coding.
\newblock {\em To appear in Physical Review A}, 2010.
\newblock arXiv:0712.2223.

\bibitem{arx2008wildeUQCC}
Mark~M. Wilde and Todd~A. Brun.
\newblock Unified quantum convolutional coding.
\newblock In {\em Proceedings of the IEEE International Symposium on
  Information Theory}, pages 359--363, Toronto, Ontario, Canada, July 2008.
\newblock arXiv:0801.0821.

\bibitem{arx2008wildeGEAQCC}
Mark~M. Wilde and Todd~A. Brun.
\newblock Quantum convolutional coding with shared entanglement: General
  structure.
\newblock {\em To appear in Quantum Information Processing}, 2010.
\newblock arXiv:0807.3803.

\bibitem{pra2009wilde}
Mark~M. Wilde and Todd~A. Brun.
\newblock Extra shared entanglement reduces memory demand in quantum
  convolutional coding.
\newblock {\em Physical Review A}, 79(3):032313, March 2009.

\bibitem{arx2007poulin}
David Poulin, Jean-Pierre Tillich, and Harold Ollivier.
\newblock Quantum serial turbo-codes.
\newblock {\em IEEE Transactions on Information Theory}, 55(6):2776--2798, June
  2009.
\newblock arXiv:0712.2888.

\bibitem{PhysRevA.79.062325}
Mark~M. Wilde.
\newblock Quantum-shift-register circuits.
\newblock {\em Physical Review A}, 79(6):062325, June 2009.

\bibitem{cormen}
Thomas~H. Cormen, Charles~E. Leiserson, Ronald~L. Rivest, and Clifford Stein.
\newblock {\em Introduction to algorithms}.
\newblock MIT Press, 2009.

\end{thebibliography}

\end{document}